\def\boxx{\vcenter{\vbox{\hrule height.4pt
          \hbox{\vrule width.4pt height8pt
          \kern8pt\vrule width.4pt}\hrule height.4pt}}}
\def\beq{\begin{eqnarray}}
\def\eeq{\end{eqnarray}}
\def\2{\frac{1}{2}}
\title{On system rollback and totalised fields}
\def\beq{\begin{eqnarray}}
\def\eeq{\end{eqnarray}}
\def\2{\frac{1}{2}}
\newtheorem{proposition}{Proposition}
\newtheorem{problem}{Problem} 
\newtheorem{definition}{Definition}
\newtheorem{lemma}{Lemma}
\newtheorem{proof}{Proof}
\begin{document}

\author{Mark Burgess and Alva Couch\\Tufts University\\Oslo University College}

\maketitle

\begin{abstract}
{\small In system operations it is commonly assumed that arbitrary
changes to a system can be reversed or `rolled back', when errors of
judgement and procedure occur.  We point out that this view is flawed
and provide an alternative approach to determining the outcome of
changes.

Convergent operators are fixed-point generators that stem from
the basic properties of multiplication by zero.  They are capable of
yielding a repeated and predictable outcome even in an incompletely
specified or `open' system.  We formulate such `convergent operators'
for configuration change in the language of groups and rings and show
that, in this form, the problem of convergent reversibility becomes
equivalent to the `division by zero' problem.  Hence, we discuss how
recent work by Bergstra and Tucker on zero-totalised fields helps to
clear up long-standing confusion about the options for `rollback' in
change management.}
\end{abstract}

\section{Introduction}

The assumption that it is possible to reverse changes, or create
generic `undo' buttons in arbitrary software systems, is a persistent
myth amongst software developers, system designers, and system
operators.  The term `rollback' is often used for this, usurped from
the original usage in database transaction 
theory\cite{weikum1} to describe both the undoing of an operation, as well as so-called ``time travel'' in which one views past snapshots of a database\cite{postgres}.  In current usage, rollback refers to
the act of undoing what has been done; it is intimately related to
checkpointing\cite{li1,li2,plank2,checkpoint,pt:01:pac}, version
control, and release management.

In single-threaded and parallel software applications, many authors have 
developed a `journaling' approach to reversibility  and rollback 
(see foregoing references on checkpointing).  A stack of
state-history can be kept to arbitrary accuracy (and at proportional
cost), provided there is sufficient memory to document changes.  In
more general `open' (or incompletely specified) systems the cost of
maintaining history increases without bound as system complexity
increases.  We shall show that arbitrary choices -- which we refer to
as {\em policy decisions} -- are required to choose remedies for
incomplete specifications.

A fixed-point model of change was introduced in
\cite{burgesstheory,burgessC11}, based on the notion of
repairability or `maintenance' of an {\em intended state}.  This model
is realized in the software Cfengine\cite{burgessC1}, and was further
elaborated upon using an alternative formulation in \cite{couchscp}.
The crux of this approach is to bring about a certainty of outcome,
even in an incompletely specified (or `open') system, and has proved
to have several advantages over traditional delta approaches,
including that it allows autonomic repair of developing problems.
However, this certainty is brought at the expense of a loss of history
that would enable the reversal of certain kinds of changes.

In this paper we discuss a formulation of policy-based change
management from a different perspective: that of computation
with data-types. In particular, we note the relationships to recent
work by Bergstra and Tucker on division-safe calculation in algebraic
computation\cite{bergstra2,bergstra4}.  We show that reversibility in
system management and totalization of rational fields are closely related.

The discussion is potentially large, so we set modest goals.  We begin
by reviewing basic ideas about reversibility, and then recall the
notion of `convergent' or `desired-state' operations.  We explain the
relationship of these abstract operators to the zeros of rings and
fields and we show how the inverse zero operation $0^{-1}$ can be
viewed as an attempt to `roll back' state from such a convergent
operation, in one interpretation of configuration management.  This
makes a connection between `calculation' and system configuration,
implicit in the encoding of data into types.  Finally, noting that
zero plays two roles for $+,\cdot$ in ring computation, we compare the
remedies for division-safe calculation with options for reversal in
change management.

\section{Notation}

%One of the difficulties that arises in discussing rollback is that one
%must carefully distinguish between levels of abstraction in
%specification.  from abstract object to representation and
%realization.
%%%
%%% The symbol $t$ will represent a time, and $\delta t$ is a time
%%% increment. Similarly $\delta X$ will imply a relative change in quantity
%%% $X$. $S$ will denote a general set, $R$ a ring and $F$ a field. $G$ is
%%% a group with elements $g_1,g_1^{-1},\ldots,I$, where $I$ is the
%%% identity element.
%%%
%%% Although there is no direct need to introduce the concepts of a ring,
%%% it is pertinent to take into account the high-level coding of memory
%%% into structured data types, as this encoding plays a key role in what
%%% motivates the abstract relationship between computation and
%%% configuration changes. It is also intuitive to be able to write
%%% expressions of the form $Q \rightarrow Q+\Delta Q$, and thus a
%%% clarification of the relationship between ring and groups is useful.
%%%

We follow the notation of \cite{burgesstheory} in writing a
generic operators as letters with carets over them, e.g.  $\hat O_1,
\hat O_2,$ etc, while generic states on which these operators act are
written in Dirac notation $|q\rangle$.  The resulting state after
applying an operator $\hat O_1$ to a system in the state $|q\rangle$
is written as $\hat O_1 |q\rangle$. 
%%% This isn't true any more. 
%%% Ring realizations of these abstract
%%% quantities are written in unembellished notation, and matrix representation
%%% are written with $[]$ brackets.

The symbol $t$ will represent a time, and $\delta t$ is a time
increment. Similarly $\delta X$ will imply a relative change in quantity
$X$. $S$ will denote a general set, $R$ a ring and $F$ a field. $G$ is
a group with elements $g_1,g_1^{-1},\ldots,I$, where $I$ is the
identity element.

When discussing rings and fields and division-safe calculation, we
shall stay close to the notation of Bergstra and Tucker\cite{bergstra2,bergstra4}.

\section{Modelling configuration parameters} 

Configuration management is largely viewed as a process of setting and
maintaining the values of {\em configuration parameters} that control
or influence software behavior.  A parameter is usually a number or string
having a finite (though potentially large) set of useful values.  For
example, one parameter might be the number of threads to use in a web
server, with a typical value of 10.  Another might be a `yes' or `no'
string determining whether a web server should be started at boot time.

Given that these parameters are viewed as numbers and strings,
we propose a field structure for each configuration parameter $X$ by
injectively mapping its possible values (as a set $G_X$) to a subset 
of some field $(F_X,+,\cdot)$, by an injection
$\phi: G_X \rightarrow F_X$. There are three possible structures 
for $G_X$, including sets of rational numbers, finite
sets of numbers, and sets of strings. 
If $G_X=\mathbb{Q}$ is the set of rational numbers, $G_X$ maps to itself. 
Finite sets of integers $G_X\subset\mathbb{Z}$ containing $n$ possible
values can be mapped to the first $n$ integers in $\mathbb{Q}$, starting
from 0.
String parameter sets containing a finite number of values 
can be likewise mapped to the first $n$ integers in $\mathbb{Q}$. 
For example, a string parameter taking the
values `yes' and `no' might be mapped via: 
\beq 
\hbox{`yes'} & \mapsto & 1 \nonumber\\
\hbox{`no'} & \mapsto & 0 
\label{eqn:yesno}
\eeq

The purpose of $\phi: G_X \rightarrow \mathbb{Q}$ is to impart meaning to the field
operations $+$ and $\cdot$ for parameter values in $G_X$.  
We may extend $G_X$ to a set $G'_X$ by adding potentially meaningless
values, and extend $\phi$ to a bijection $\phi': G'_X \rightarrow F_X$. 
The exact structure of this mapping does not matter. 
We may thus define 
\beq
q+r & \equiv & \phi'^{-1}(\phi'(q)+\phi'(r)) \nonumber\\ 
q \cdot r & \equiv & \phi'^{-1}(\phi'(q)\cdot \phi'(r))
\eeq
whenever $\phi'^{-1}$ exists.  
Since $F_X$ is a field, $+$ and $\cdot$ for 
$G'_X$ satisfy the usual field axioms:
\beq
+ &:& G'_X\times G'_X \rightarrow G'_X,\nonumber\\
\cdot&:& G'_X\times G'_X \rightarrow G'_X,\nonumber\\
\exists 0 \in G'_X\; &|&\; x+0 = 0+x = x,~~~ \forall x \in G'_X\nonumber\\
\exists 1 \in G'_X\; &|&\; 1\cdot x = x\cdot 1 = x,~~~ \forall x \in G'_X\nonumber\\
\forall x, \exists -x \in G'_X &|& x+(-x) = (-x)+x = 0\nonumber\\
\forall x \in G'_X, x\not = 0, \exists x^{-1} \in G'_X &|& x \cdot x^{-1} = x^{-1} \cdot x = 0\nonumber\\
\forall x, y \in G'_X &|& x+y = y+x\nonumber\\ 
\forall x, y \in G'_X &|& x \cdot y = y \cdot x\nonumber\\ 
\forall x,y,z \in G'_X &|& (x\cdot y)\cdot z = x\cdot (y\cdot z) \nonumber\\ 
\forall x,y,z \in G'_X &|& (x+y)+z = x+(y+z) \nonumber\\
\forall x,y,z \in G'_X &|& (x + y)\cdot z = (x \cdot z) + (y \cdot z). 
\eeq
The point of this discussion is to make clear that\cite{burgesstheory}: 
\begin{proposition} 
Without loss of generality, we may consider
the potential values $G_X$ of any configuration parameter $X$ to have a field structure 
$(G'_X, +, \cdot)$ for some $G'_X \supset G_X$, where $G'_X$ is isomorphic
to some field $(F_X,+,\cdot)$.
\end{proposition}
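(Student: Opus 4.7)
The plan is to verify the proposition by direct construction, mirroring the three cases enumerated in the preceding discussion: $G_X = \mathbb{Q}$, $G_X$ a finite subset of $\mathbb{Z}$, or $G_X$ a finite set of strings. In each case I would exhibit an explicit injection $\phi : G_X \to F_X$ into a chosen target field, define $G'_X$ as $G_X$ augmented by just enough formal symbols to make the map extend to a bijection $\phi' : G'_X \to F_X$, and then transport the field operations from $F_X$ back to $G'_X$ via the formulas
\beq
q + r &\equiv& \phi'^{-1}(\phi'(q) + \phi'(r)), \nonumber\\
q \cdot r &\equiv& \phi'^{-1}(\phi'(q) \cdot \phi'(r)) \nonumber
\eeq
already displayed in the text.

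For $G_X = \mathbb{Q}$ one may take $F_X = \mathbb{Q}$ and $\phi = \phi' = \mathrm{id}$, so nothing is required. For the two finite cases the same $F_X = \mathbb{Q}$ suffices: the injection $\phi$ sends the $n$ allowed values to $\{0,1,\dots,n-1\}$, and $G'_X$ is obtained by adjoining a distinct formal symbol for every rational not in $\phi(G_X)$, yielding a bijection $\phi' : G'_X \to \mathbb{Q}$ by construction. One could alternatively embed into a finite field $\mathbb{F}_q$ of sufficient cardinality when convenient, but the point is only that \emph{some} such embedding exists; the clause $G'_X \supset G_X$ in the proposition is precisely what licenses the enlargement.

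Having produced a bijection $\phi'$ between $G'_X$ and a bona fide field $F_X$, the field axioms listed in the excerpt hold on $(G'_X, +, \cdot)$ by transport of structure: each identity on $G'_X$ reduces, after applying $\phi'$ to both sides, to the corresponding identity in $F_X$, which holds by hypothesis. By construction $\phi'$ is then a field isomorphism $G'_X \cong F_X$, as required. I do not anticipate a substantive obstacle: the proposition is really an existence statement, and the only judgement involved is the choice of ambient field $F_X$ whose cardinality dominates $|G_X|$. The axiom-by-axiom verification after transport is mechanical, and I would not grind through it.
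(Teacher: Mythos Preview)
Your proposal is correct and follows essentially the same approach as the paper: the proposition is stated as a summary of the preceding discussion, and the paper's ``proof'' consists precisely of the case split on the type of $G_X$, the injection $\phi$ into $\mathbb{Q}$, the extension to a bijection $\phi'$ onto $F_X$, and the transport of the field operations via $\phi'^{-1}$. You have reconstructed exactly this argument, with the only addition being the (harmless) remark that a finite field could serve in place of $\mathbb{Q}$ for the finite cases.
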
 

Usually, the structure of $\phi'$ is simple. 
For example, via the mapping in Equation~\ref{eqn:yesno}, 
\beq
\hbox{`yes'}+\hbox{`no'} & = & \hbox{`yes'} \nonumber\\
\hbox{`no'}+\hbox{`no'} & = & \hbox{`no'} \nonumber\\
\hbox{`yes'} \cdot \hbox{`yes'} & = & \hbox{`yes'} \nonumber \\
\hbox{`yes'} \cdot \hbox{`no'} & = & \hbox{`no'} 
\eeq
Thus $\hbox{`yes'}$ is the multiplicative unit and $\hbox{`no'}$ is
the additive unit of $G'_X$, respectively.  
%%% The set $G'_X$ is the
%%% set of permissible states for a specific parameter $X$, while the
%%% operations $+$ and $\cdot$ define potential ways in which states can
%%% be influenced.

In the rest of this paper, we will not consider the semantics of
$G'_X$, so there is no need to distinguish between the base field 
$(F_X,+,\cdot)$ and its image
$(G'_X,+,\cdot)$ in parameter space.  We will use
$(F_X,+,\cdot)$ to refer both to the base field and its
isomorphic image in parameter space.

\section{Modeling parameter changes} 
\label{pt} 

Viewing parameter values as a subset of a field (e.g., the rational numbers), with
corresponding algebraic structure, allows us to distinguish three
approaches to change in the value of a parameter, making
precise the notion of change $q \rightarrow q + \delta q$, used in
\cite{burgesstheory}.  We call the three approaches {\em relative} ($\Delta$),
{\em absolute} ($C$), and {\em multiplicative} ($\mu$) or scale change, and we now wish to
separate these, so as to distinguish their properties more clearly.

We partition the field algebra into partial functions using a trick from representation theory to write
binary addition in the form of a parameterized unary group multiplication 
by introducing a tuple form with one extra dimension\cite{burgesscovariant}.  
We write the parameter $X$ as a vector $|X\rangle$:
\beq
|X\rangle =
\left(
\begin{array}{c}
X \\
1
\end{array} 
\right)
\eeq
and use standard matrix algebra to express changes, building on $+$
and $\cdot$ for elements of $F_X$.

Using this notation, a {\em multiplicative change} in a parameter $X$ is the result of a
matrix operation of the form:
\beq
|X'\rangle =
\mu(q)\ |X\rangle
= |q \cdot X\rangle
\eeq
where $q$ is an element of the field $(F_X, +, \cdot)$ and $\mu(q)$ is
defined as:
\beq
\mu(q) = 
\left(
\begin{array}{cc} 
q & 0 \\
0 & 1
\end{array} 
\right)
\eeq
This has the effect of setting $X'=q\cdot X$, and semantically, is a 
scaling operation. 

An {\em absolute change} has the form: 
\beq
|X'\rangle =
C(q)\ |X\rangle
= |q\rangle
\eeq
where $C(q)$ is defined as 
\beq
C(q) = 
\left(
\begin{array}{cc} 
0 & q \\
0 & 1
\end{array} 
\right)
\eeq
An absolute change is the equivalent of setting $X=q$ for some $q \in F_X$. 

A {\em relative change} in a parameter $X$
takes the form $X'=X + \delta X$ where $+$ is the field
addition operation for $F_X$ and $\delta X \in F_X$. 
We can write a relative change as
\beq
|X'\rangle 
= \Delta(\delta X)\ |X\rangle
= |X + \delta X\rangle
\eeq
where $\Delta(q)$ is defined as
\beq
\Delta(q) = 
\left(
\begin{array}{cc} 
1 & q \\ 
0 & 1 
\end{array} 
\right)
\eeq
Any relative change is a linear
change.  The converse is not true; the linear operators $\mu(x)$ and
$C(x)$ are not equivalent to relative operators.  

Composing
combinations of $\mu(q)$, $C(q)$, and $\Delta(q)$ by matrix
multiplication always results in a linear operator of the form: 
\beq
\left(
\begin{array}{cc} 
a & b \\
0 & 1
\end{array} 
\right)
\eeq
where $a$ and $b$ are elements of $F_X$. 

Note that $\mu(F_X \setminus \{0\}) = \{\mu(q)\ |\ q \in F_X \setminus \{0\}\}$ is a
(multiplicative) Abelian group, because $\mu(q)$ has multiplicative
inverse $\mu(q^{-1})$ for $q \not = 0$.  Likewise $\Delta(F_X)= \{\Delta(q)\ |\ q \in F_X\}$ is a (multiplicative) 
Abelian group, where $\Delta(q)$ has multiplicative inverse
$\Delta(-q)$.  $C(q)$, by contrast, is always singular and has no
multiplicative inverse, so that $C(F_X) = \{C(q)\ |\ q \in F_X\}$ is not
a group. 

Also note that
\beq
\left(
\begin{array}{cc} 
a & b \\
0 & 1 
\end{array}
\right)
\left(
\begin{array}{cc} 
c & d \\
0 & 1 
\end{array}
\right)
=
\left(
\begin{array}{cc} 
ac & ad + b \\ 
0 & 1
\end{array} 
\right)
\eeq
while
\beq
\left(
\begin{array}{cc} 
c & d \\
0 & 1 
\end{array}
\right)
\left(
\begin{array}{cc} 
a & b \\
0 & 1 
\end{array}
\right)
=
\left(
\begin{array}{cc} 
ac & bc+ d \\ 
0 & 1
\end{array} 
\right)
\eeq
so multiplication of elements in the span of $\mu(F_X) \cup \Delta(F_X) \cup C(F_X)$ is only commutative if $ad+b=bc+d$. 

Note that the vectors $|q\rangle$ can still be thought of as a field with
operations: 
\beq
|q\rangle + |r\rangle
=
\left(
\begin{array}{c}
q\\1
\end{array}
\right)
+
\left(
\begin{array}{c}
r\\1
\end{array}
\right)
\equiv
\left(
\begin{array}{c}
q+r\\1
\end{array}
\right)
=
|q+r\rangle
\eeq
and
\beq
|q\rangle \cdot |r\rangle=
\left(
\begin{array}{c}
q\\1
\end{array}
\right)
\cdot
\left(
\begin{array}{c}
r\\1
\end{array}
\right)
\equiv
\left(
\begin{array}{c}
q \cdot r\\1
\end{array}
\right)
=
|q \cdot r\rangle
\eeq

Thus it is reasonable to write things like $\delta q = |q_1\rangle -
|q_2\rangle$ and $|q_2\rangle=|q_1\rangle+\delta q$.  We will often
write the latter as $|q_1+\delta q\rangle$ without confusion,
and will often switch between additive ($|q+\delta q\rangle$) and
multiplicative ($\Delta(\delta q)\ |q\rangle$) representations of addition.  

\section{Totalisation and rollback}

Obviously, some concept of rollback is possible in the above system
only if the effect of each operator can be undone. This is only 
possible if the operators form a group so that inverses always exist: 
\begin{lemma} 
The span of $\mu(F_X) \cup \Delta(F_X) \cup C(F_X)$ form a multiplicative 
group iff $F_X$ is 0-totalized. 
\end{lemma}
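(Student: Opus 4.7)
The plan is to characterise the span explicitly and then reduce the group property to a single scalar condition on $F_X$. First I would observe, as the excerpt already notes, that every product of elements of $\mu(F_X) \cup \Delta(F_X) \cup C(F_X)$ has the form $M(a,b) = \left(\begin{array}{cc} a & b \\ 0 & 1 \end{array}\right)$ with $a,b \in F_X$, and conversely the identity $\Delta(b)\,\mu(a) = M(a,b)$ shows that every such matrix is attained. Hence the span is exactly $\{M(a,b) : a,b \in F_X\}$, it is manifestly closed under matrix multiplication by the composition formula already displayed in the text, and $M(1,0) = \mu(1) = \Delta(0)$ serves as a two-sided identity.

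The second step is the invertibility analysis. A direct calculation shows $M(a,b)\,M(a',b') = M(aa', ab' + b)$, so a two-sided inverse of $M(a,b)$ inside the span exists iff there is $a' \in F_X$ with $aa' = a'a = 1$, in which case $M(a,b)^{-1} = M(a^{-1}, -a^{-1}b)$. Thus ``span is a group'' is equivalent to ``every $a \in F_X$ has a multiplicative inverse.'' In an ordinary field this fails exactly at $a=0$, which is why $C(q) = M(0,q)$ is singular; 0-totalisation is by definition the augmentation of $F_X$ that supplies an inverse for $0$, so the two conditions coincide.

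To finish I would write out the two directions explicitly. For ($\Rightarrow$), assume the span is a group; then $C(1) = M(0,1)$ must have an inverse $M(a',b')$ in the span, and $M(0,1)\,M(a',b') = M(0 \cdot a', 0 \cdot b' + 1) = M(1,0)$ forces $0 \cdot a' = 1$, i.e.\ $0$ has a multiplicative inverse in $F_X$, so $F_X$ is 0-totalised. For ($\Leftarrow$), if $F_X$ is 0-totalised then every $a \in F_X$ has an inverse $a^{-1}$, and the formula $M(a,b)^{-1} = M(a^{-1}, -a^{-1}b)$ produces a genuine two-sided inverse inside the span, so every element is invertible.

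The main obstacle is not the matrix arithmetic but pinning down what ``0-totalised'' is taken to mean in this lemma, since in the Bergstra--Tucker convention $0 \cdot 0^{-1}$ equals $0$ rather than $1$ and so would spoil $M(0,1)^{-1}M(0,1) = M(1,0)$. I would therefore read 0-totalisation here in the stronger sense that $0$ is given a true multiplicative inverse in the enlarged structure, and remark that this is precisely the algebraic price one must pay to make the whole affine-like semigroup into a group. This framing also motivates the subsequent sections, where the competing conventions for $0^{-1}$ correspond to different semantics of rollback.
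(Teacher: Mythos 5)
Your argument follows essentially the same route as the paper's own proof: represent every element of the span as $\left(\begin{array}{cc} a & b \\ 0 & 1 \end{array}\right)$, observe that inversion amounts to dividing by $a$, and note that this is problematic only at $a=0$, which is precisely what 0-totalisation is invoked to repair. Your closing caveat is well taken and is in fact sharper than the paper's treatment: under the Bergstra--Tucker convention $0^{-1}=0$, your formula $M(a,b)^{-1}=M(a^{-1},-a^{-1}b)$ becomes total but fails to be a genuine two-sided inverse at $a=0$ (since $0\cdot 0^{-1}=0\neq 1$), a point the paper's proof passes over when it asserts that the inverse ``always exists iff it exists for $a=0$''.
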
 
\begin{proof} 
An arbitrary operator has the form
\beq
\left(
\begin{array}{cc} 
a & b \\
0 & 1
\end{array} 
\right)
\eeq 
so that it maps $|X\rangle$ to $|a \cdot X + b\rangle$. The inverse operation for this mapping (if it exists) maps $|Y\rangle$ to $|(Y-b)/a\rangle$.
This always exists iff it exists for $a=0$, which in turn holds
iff $F_X$ is 0-totalized. 
\end{proof} 
The operators that are not already invertible in non-totalised fields
include $C(q)$ and $\mu(0)$; inverting these requires an explicit division by zero. 

The work of Bergstra and Tucker views field totalisation as an information 
problem. Totalising a field is a matter of remembering -- for each value 0  -- h
how it came about, so that dividing and multiplying by 0 can be 
accomplished algebraically. 
E.g., the fact that $0 \cdot X=0$ means that for the latter $0$, $0/0 = 0 \cdot X/0 = X \cdot (0 / 0) = X \cdot 1 = X$. This can only be done if we somehow
remember that the $0$ was generated via the operation $0 \cdot X$. 
Division-safe calculation refers to the situation in which this kind of 
history is {\em not} needed. 
One of the key contributions of Bergstra and Tucker is to demonstrate just 
how difficult it is to determine whether information is being lost. 

For us, rollback is similarly an information problem. When we employ the operator $C(q)$, we lose information. Zero-totalizing the base field $F_X$ is equivalent with saving that (algebraic) information for later use, to compute $C^{-1}(q)$. 
In like manner, ``division-safe'' calculation is analogous to ``reversal-safe'' 
rollback, in the sense that the concepts of safety are equivalent. A rollback is
reversal-safe exactly when the operations that led to the current state are division-safe. 

Our problem is subject, however, to more forms of information loss than zero-totalisation can correct. To model other losses, we must consider issues of time
and determinism. 

\section{Modeling changes over time}

So far we have a non-temporal model of change; however, since
operators $C(q)$ do not commute, partial orderings of operations are
important and time is the natural expression of sequence at our system level
of abstraction. Using our framework we can now say that between any two times $t$ and
$t+\delta t$, a system state $|q\rangle$ might change from $|q\rangle$
to $|q+\delta q\rangle$\cite{burgesstheory}.

Starting at a low level, change can be modeled by 
a finite automaton, where the transitions are operators $\hat O$ as above. 
The changes applied after a finite series of steps can be represented
as a matrix product of the form $\hat O_1 \cdots \hat O_n$. 

In automaton theory, one makes the distinction between
deterministic and non-deterministic automata\cite{lewis1}. A
deterministic automaton is a 5-tuple 
\beq
M_D = (Q,A,|q_i\rangle,Q_f,\Delta_D),
\eeq
where $Q$ is a set of states, $A$ is an alphabet of input
instructions, $q_i$ is an initial state, $Q_f \in Q$ a set of possible
final states and $\Delta_D: Q\times A\rightarrow Q$ is a transition
function that takes the automaton from a current state to its next
state, deterministically in response to a single input symbol from the
alphabet. Such a string of operational symbols is the basis for a `journal'
that is intended to track the changes made.
In automata which form `sufficiently dense' graphs, the transition
function's effect may also be seen as an evolution operator, driving the
system through a path of states
\beq
\Delta_D(I): |q\rangle \rightarrow |q+\delta_I q\rangle
\eeq
This mapping might or might not be a bijection; it might or
might not possess an inverse. Although automata are considered to be
a model for computation or even grammars, they can be used to describe
change at any `black box' level of system description, as the model is
entirely general.

A non-deterministic automation is almost the same as a deterministic
one, except that its transition function $\Delta_N: Q\times \{A\cup
0_+\} \rightarrow Q$, accepts one more pseudo-symbol, $0_+$, which is
the empty input string. Thus, a non-deterministic automaton can make
transitions spontaneously, unprompted by input. In the language of \cite{burgesstheory}, a
deterministic automation is a {\em closed system} and a
non-deterministic one is an {\em open system}.

Non-deterministic changes are common in real systems. 
Examples of non-deterministic changes include:
\begin{itemize}
\item Delete of all files from a computer.
\item Remove a firewall, system is infected by virus, replace firewall, system is
still infected by virus.
\item Checkpointing: erase state and replace with a stored image
from time $t_0$, all history is lost between $t_0$ and now.
\end{itemize}

Note that there is a one-to-one
correspondence between input and output only in the deterministic
case. However, there are very few closed deterministic automata in
real-world computing environments. Networks of users operating
multi-tasked, multi-threaded applications create the high level
appearance of many overlapping non-deterministic automata.

\begin{definition}
We define a `computer system' to be a non-deterministic automaton,
represented as a set of states with types, and data sets that are isomorphic 
through extension to the rational numbers. 
\end{definition}
One must assume non-determinism of all actual systems, because in any
modern, preemptive operating system high level changes to observable
data objects cannot be traced to an alphabet of intentionally applied
and documented operations, thus there are apparent transitions that
cannot be explained by a journal.

\section{Journals and histories of change}

In any solution generated by a difference equation (or transition
function), the conversion of small increments or `deltas' into an
absolute state requires the specification of end-points, analogous to
the limits of a contour integral in calculus along a well-defined path
$P$:
\beq
|q_i\rangle - |q_f\rangle = \int_{P\,i}^f dq.
\eeq
This path corresponds to a sequence of input symbols for an automaton, 
corresponding -- in our case -- to operators to be applied. 
The analogue in terms of group transformations is to start from an origin
state, or `ground state' $|0\rangle$ (often called a baseline state in system
operations), and to apply relative changes sequentially from this to
achieve a final desired outcome. 

The choice of the baseline state lies outside of the
specification of the change calculus. The origin or baseline state
is an ad hoc fixed point of the system, by virtue of an
external specification alone. It is arbitrary, but usually plays a
prominent role in system operators' model of system change. 
In this work, the choice of a baseline state is part of what we shall
refer to as a calibration of the system, but counter to tradition we
shall advocate calibration of the {\em end state} rather than the ad
hoc {\em initial state}.

To model intended versus actual change, we introduce the notion of a
journal, inspired by the notion of journaling in filesystems. A
journal is a documentation of changes applied to a system
intentionally, noting and remembering that  -- in real systems -- 
this can be different from what actually takes place.

\begin{definition}[Journal $J$]
A complete, ordered sequence of all input symbols passed to an
automaton $\alpha^*$ from an initial time $t_i$ to a final time $t_f$
is called the automaton's journal $J = (t_i,t_f)$. Each symbol
$\alpha$ corresponds to a change in system state $\delta_\alpha q$. A
journal has a scope that is known to the user or process that writes
the journal. A journal change $\delta J$ involves adding or removing symbols
in $\alpha$ to $J$, and adjusting the times.
\end{definition}

Two journals $J_1$ and $J_2$ may be called {\em congruent} if they
have the same number of symbols $|J_1|=|J_2|$ and every symbol is
identically present and in the same order\cite{lisa0299}.
\begin{lemma}
The final state $|q_f\rangle$ obtained by applying congruent journals of transitions $J_1, J_2$
to identical automata $M_1,M_2$ is identical, iff the initial
states $|q_i\rangle$ are identical, and $M_1$ and $M_2$ are deterministic.
\end{lemma}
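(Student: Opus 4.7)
The plan is to prove both directions of the \emph{iff} separately: the forward implication by induction on journal length using determinism, and the reverse implication by contrapositive via explicit counterexamples.

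For the $(\Leftarrow)$ direction, assume $|q_i\rangle_1 = |q_i\rangle_2$ and that $M_1, M_2$ are identical deterministic automata. I would induct on $n = |J_1| = |J_2|$. The base case $n=0$ is immediate since the final state equals the initial state. For the inductive step, suppose that after applying the first $k$ symbols both automata occupy the same state $|q_k\rangle$. By journal congruence the $(k+1)$-st input symbol $\alpha_{k+1}$ is the same for both; since the automata are identical and deterministic, $\Delta_D(|q_k\rangle, \alpha_{k+1})$ is a uniquely determined next state, so both automata advance to the same $|q_{k+1}\rangle$. After $n$ steps the final states coincide.

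For the $(\Rightarrow)$ direction, I would prove the contrapositive: if either the initial states differ or at least one of $M_1, M_2$ is non-deterministic, then congruent journals need not yield identical final states. If $|q_i\rangle_1 \neq |q_i\rangle_2$, take the trivial journal consisting of no symbols; the resulting final states are the distinct initial states. If instead the automata are non-deterministic, then they admit $0_+$ pseudo-symbol transitions that are not recorded in the journal, so two runs of the same journal from the same initial state can reach different final states by making (or omitting) a spontaneous $0_+$ transition between two journal entries.

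The main obstacle is conceptual rather than technical: one must be careful about what ``applying a journal to a non-deterministic automaton'' even means, since the journal records only alphabet symbols from $A$ and not the spontaneous $0_+$ transitions that distinguish $\Delta_N$ from $\Delta_D$. Once this subtlety is unpacked --- by exhibiting two distinct executions of the same congruent journal that follow different sequences of interleaved $0_+$ transitions --- the counterexample in the non-deterministic case is immediate, and the two directions together close the equivalence.
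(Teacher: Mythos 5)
Your proof follows essentially the same route as the paper, whose entire argument is the one-sentence observation that non-deterministic automata admit spontaneous $0_+$ transitions that a journal cannot record; your induction for the $(\Leftarrow)$ direction and the explicit contrapositive counterexamples simply make that remark precise. The only caveat --- one the paper shares --- is that the ``only if'' direction needs the existential reading (there \emph{exist} runs or journals witnessing divergence), since a fixed journal ending in an absorbing operation such as $C_{|q_0\rangle}$, or a non-deterministic automaton whose $0_+$ moves happen to be state-preserving, would not actually separate the final states.
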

This follows from the definitions of (non-)deterministic automata
which allows spontaneous changes $\delta_{0_+}q$. To record all
changes in a non-deterministic system we need to record absolute state even when
no input change is made. This brings us to:

\begin{definition}[History]
A complete, ordered stack of all intermediate snapshots of a
system's total state $|q_f\rangle(t)^*$ output by an automaton at all times
$t$ between an initial time $t_i$ to a final time $t_f$ is called the
automaton's history $H$. A change $\delta H$ involves pushing or popping
the complete current state onto the stack $H$, and adjusting the times.
\end{definition}
The history $H$ is capable of including states that were not directly
affected by the journal transitions $\delta_\alpha q$.
We use a stack as a convenient structure to model histories; see for
instance \cite{bergstra3} and references for a discussion of stacks.
The ability to model system configuration by relative changes is
affected by the following lemma:

\begin{lemma} \label{l1}
For any automaton $M$, $|H_M| \ge |J_M|$, and $|H_M|=|J_M|$ iff $M$ is
a deterministic automaton (closed system).
\end{lemma}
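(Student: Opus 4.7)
The plan is to use the two transition-function definitions directly: a deterministic automaton has $\Delta_D: Q \times A \rightarrow Q$, while a non-deterministic one has $\Delta_N: Q \times \{A \cup 0_+\} \rightarrow Q$ with an additional spontaneous-transition input $0_+$. The key observation is that the journal records only inputs, while the history records every state the automaton passes through, so any spontaneous transitions inflate $H$ relative to $J$.

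First, I would establish the inequality $|H_M| \ge |J_M|$ for every automaton. Each symbol $\alpha \in J_M$ triggers the transition function and yields a new state $|q + \delta_\alpha q\rangle$, which by the definition of $H$ is pushed onto the history stack. Thus the map from journal entries to history pushes is injective, giving $|H_M| \ge |J_M|$ unconditionally.

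Next, for the direction ``$M$ deterministic $\Rightarrow |H_M| = |J_M|$'', I would invoke the definition of $\Delta_D$: since its domain is $Q \times A$ (with no $0_+$), the only events that alter the state are symbols drawn from the journal. Hence the injection from journal entries to history entries is also surjective, and equality holds. For the converse ``$|H_M| = |J_M| \Rightarrow M$ deterministic'', I would argue the contrapositive: if $M$ is non-deterministic (open), then its definition admits at least one $0_+$-transition between some pair of consecutive times; such a transition changes the observable state (so a snapshot is pushed onto $H$) while adding nothing to $J$. Consequently $|H_M| > |J_M|$, contradicting the assumed equality. Equivalently, equality forces every history push to be matched by a journal symbol, which means no spontaneous $0_+$ transitions ever occur --- i.e., $M$ behaves as a closed system in the sense of the preceding section.

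The main obstacle is the converse direction, because strictly speaking a non-deterministic automaton could happen to execute a run in which $0_+$ is never consumed. I would address this by reading the ``iff'' in the spirit of the paper's open/closed dichotomy: the identification $|H_M| = |J_M|$ must hold \emph{for all} runs, so the existence of even a single admissible $0_+$-transition in $\Delta_N$ suffices to break equality on some run, and hence the equality characterises exactly the automata whose transition function has domain $Q \times A$, i.e.\ the deterministic (closed) ones. The remaining bookkeeping --- matching time stamps between the pushed history entries and the journal positions --- is routine once this point is fixed.
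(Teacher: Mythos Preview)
Your argument is correct and follows essentially the same line as the paper's own proof: both rest on the shape of the transition functions $\Delta_D$ versus $\Delta_N$, observing that every journal symbol $\alpha$ yields a history entry $\delta_\alpha q$ (giving $|H_M|\ge|J_M|$), while only the non-deterministic $0_+$ input can add extra history entries beyond those accounted for by the journal. Your treatment is in fact more careful than the paper's, since you explicitly flag and resolve the subtlety that a non-deterministic automaton might, on a particular run, consume no $0_+$ symbols; the paper simply asserts the converse without addressing this point.
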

The proof follows from the form of the transition functions
for automata, and the possibility of one or more occurrences
of $0_+$ in the input of a non-deterministic automaton. In a deterministic system
each $\alpha$ leads to a unique labeled transition $\delta_\alpha q$, and vice versa.
In the non-deterministic case, the history can contain any number of changes $\delta_{0_+}$
in addition to the $\alpha$, thus the length of the history is greater than or equal to
the length of the journal.

A journal is thus a sequence of {\em intended} changes, whereas a history is a sequence of actual changes.

\begin{definition}[Roll-back operation $J^{-1}$]
The inverse application of a string of inverse journal operations 
is called a roll-back operation.
The inverse is said to exist iff every operation symbol in the journal has a unique inverse.
\end{definition}
For example, for relative change:
\beq
J(q,q') : |q\rangle &\mapsto& | q+\delta q_1+\delta q_2+\delta q_3\rangle \equiv |q'\rangle
\eeq
and
\beq
J^{-1}(q,q'): |q'\rangle &\mapsto& | q'-\delta q_3-\delta q_2-\delta q_1\rangle \equiv |q\rangle
\eeq

\begin{lemma}
A roll-back journal $J^{-1}(q_i,q_f)$, for automaton $M$, starting
from state $|q_f\rangle$ will result in a final state $|q_i\rangle$
iff $M$ is deterministic and $J^{-1}$ exists.
\end{lemma}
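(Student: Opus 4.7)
The plan is to prove both directions of the biconditional, leaning heavily on Lemma~\ref{l1} (the relation $|H_M|\ge|J_M|$, with equality iff $M$ is deterministic) and on the definition of $J^{-1}$, which requires every symbol in $J$ to have a unique inverse operator.

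For the sufficiency direction, I would assume $M$ is deterministic and $J^{-1}$ exists. By Lemma~\ref{l1}, $|H_M|=|J_M|$, so the actual trajectory from $|q_i\rangle$ to $|q_f\rangle$ is precisely the matrix composition $\hat O_n \cdots \hat O_1 |q_i\rangle = |q_f\rangle$, where the $\hat O_k$ are the operators labeled by the journal symbols and no un-logged transitions occur. Because each $\hat O_k$ has a unique inverse by hypothesis, applying $J^{-1}$ amounts to forming $\hat O_1^{-1} \cdots \hat O_n^{-1}$, and a telescoping cancellation reduces $J^{-1}|q_f\rangle$ to $|q_i\rangle$. This step is essentially bookkeeping on the $2\times 2$ matrix form derived in Section~\ref{pt}.

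For the necessity direction, I would argue the contrapositive in two cases. If $J^{-1}$ does not exist, the roll-back operation is undefined by its very definition, so there is nothing to apply and the conclusion fails immediately. If instead $M$ is non-deterministic, Lemma~\ref{l1} guarantees the existence of an execution with $|H_M|>|J_M|$, i.e.\ at least one spontaneous $\delta_{0_+}q$ transition occurs between $|q_i\rangle$ and $|q_f\rangle$ that is not recorded in $J$. Since $J^{-1}$ contains only the inverses of journaled symbols, it has strictly fewer factors than the composition that actually produced $|q_f\rangle$, and therefore cannot in general return the system to $|q_i\rangle$.

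The main obstacle I anticipate is ruling out a \emph{coincidental} success in the non-deterministic case: one has to be sure that the un-logged transitions are not fortuitously cancelled by the logged ones. A clean way to close this gap is to exhibit an explicit witness where the spontaneous step is an absolute change $C(q)$. By the discussion preceding the lemma, $C(q)$ is singular and is not invertible within the span of $\mu(F_X)\cup\Delta(F_X)\cup C(F_X)$ unless $F_X$ is 0-totalised; moreover, the operators do not commute, so $J^{-1}$ cannot be rearranged to absorb it. Hence the composition $J^{-1}\,\hat O_n\cdots \hat O_1$ cannot equal the identity, which gives the desired contradiction and completes the proof.
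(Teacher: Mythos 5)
Your proposal follows broadly the same case analysis as the paper's proof, but it disposes of the non-deterministic direction by a genuinely different route, and the difference is worth noting. The paper's argument is two sentences: if $M$ is non-deterministic, the transition to $|q_f\rangle$ is ``only a partial function of the journal,'' hence $J^{-1}$ \emph{has more than one candidate value and cannot exist}; if $M$ is deterministic, the inverse exists ``trivially by construction'' provided each journal operation is invertible. That is, the paper kills the non-deterministic case by denying the \emph{existence} of $J^{-1}$. You instead allow $J^{-1}$ to exist (as a formal string of inverse symbols) and argue that it fails to \emph{return the correct state}, because the composition that actually produced $|q_f\rangle$ contains un-logged $\delta_{0_+}q$ factors that $J^{-1}$ does not undo. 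Your reading is arguably more faithful to the paper's own definition of roll-back, under which existence of $J^{-1}$ depends only on the journal symbols having unique inverses and says nothing about the automaton; the paper's phrasing quietly re-defines existence mid-proof. Your sufficiency direction (telescoping cancellation in the $2\times 2$ matrix representation, justified by $|H_M|=|J_M|$ from the earlier lemma) is a legitimate expansion of the paper's ``trivially by construction.''

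One caution: the obstacle you flag --- ruling out coincidental success --- is real, and your proposed fix does not fully close it. Exhibiting a witness run in which a spontaneous $C(q)$ transition spoils the roll-back shows that failure is \emph{possible} for a non-deterministic $M$, not that it occurs on the particular run from $|q_i\rangle$ to $|q_f\rangle$ named in the lemma; a non-deterministic automaton may happen to take no spontaneous transitions at all, in which case $J^{-1}$ would succeed. Strictly read, the ``only if'' direction is therefore false as a claim about individual runs and can only be defended as a claim about guaranteed outcomes (roll-back is not \emph{assured} to succeed). The paper has the same gap and does not acknowledge it, so you are not worse off, but if you keep the witness construction you should state explicitly that you are proving the guarantee version of the statement rather than the per-run version.
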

\begin{proof}
Assume that $M$ is non-deterministic; then the transition to state
$q_f$ is only a partial function of the journal $J$, hence $J^{-1}$ 
has more than one candidate value and thus cannot exist. If $M$ is deterministic then
the inverse exists trivially by construction, provided that each operation
in the journal exists.
\end{proof}
Setting aside technical terminology, the reason for a failure to
roll-back is clearly the loss of correspondence between journal and
history caused by changes that happen outside the scope of the intended
specification. This loss of correspondence can happen in a number of
ways, and (crucially) it is likely to happen because today's computer
systems are fundamentally non-deterministic\footnote{
In \cite{burgesstheory}, it was pointed out that this mirrors results
in information theory\cite{cover1} about transmission of data over
noisy channels, for which one has the fundamental theorem of channel
coding due to Shannon\cite{shannon1} that enables the re-assertion of
correspondence between a journal (transmitted data) and actual history
(received data) over some time interval. However, we shall not mix metaphors by pursuing this point
here.}.

\begin{definition}[Commit and Restore operations]
A commit operation at time $t$ is a system change $\hat g$ followed by a push of current
history state onto a stack as consecutive operations:
\beq
commit(t) : (\hat g,push(|q\rangle(t)))
\eeq
A restore operation is a sequence of one or more operations:
\beq
restore(t): pop(|q\rangle)
\eeq
\end{definition}
These operations are typical of version control schemes, for example.
The importance of this construction is that previous states can be
recaptured regardless of whether the operation $\hat g$ is invertible or not.
\begin{lemma}
For automaton $M$, $n$ consecutive restore operations starting
from $t_f$, are the inverse of $n$ consecutive
commit operations ending at $t_f'$, iff the journal of changes between
 $t_f > t_f'$ and $t_f'$ is empty and $M$ is a deterministic automaton.
\end{lemma}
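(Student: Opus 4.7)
The plan is to track how the pair (current state, stack) evolves through the entire timeline -- the $n$ commits up to $t_f'$, whatever happens between $t_f'$ and $t_f$, and the $n$ restores starting at $t_f$ -- and then check under which hypotheses the composite operation acts as the identity on that pair.

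First I would fix notation: let $|q_0\rangle$ be the state immediately before the first commit and let $S$ denote the pre-existing stack. Unpacking the commit definition (push of the current state, then application of $\hat g$, which is the only convention consistent with the lemma's claim), an induction on $n$ shows that after the $n$ commits the stack has grown to $S$ with $|q_0\rangle,\hat g_1|q_0\rangle,\ldots,\hat g_{n-1}\cdots\hat g_1|q_0\rangle$ pushed on top and the current state is $\hat g_n\cdots\hat g_1|q_0\rangle$. This is the single structural computation the whole proof rests on.

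For sufficiency, assume $M$ is deterministic and the journal between $t_f'$ and $t_f$ is empty. By the definition of deterministic automaton no spontaneous $0_+$ transitions occur, and by the emptiness hypothesis no input symbols are consumed, so the (state, stack) pair at $t_f$ coincides exactly with the pair at $t_f'$. The $n$ restores then pop the pushed entries in reverse order, installing each popped value as the current state; after the $n$-th pop the stack equals $S$ again and the current state is $|q_0\rangle$. Hence the composition of commits, elapsed time, and restores is the identity, i.e.\ the restores are the inverse of the commits.

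For necessity I would argue by contrapositive in two sub-cases. If $M$ is non-deterministic, invoke Lemma~\ref{l1}: between two consecutive restores (or between the last commit and the first restore) a $0_+$ transition may occur, overwriting the current state after a pop has set it, so the final state generically differs from $|q_0\rangle$. If instead the journal between $t_f'$ and $t_f$ contains symbols, those symbols either are further commits/restores which desynchronize the push/pop counts (leaving the stack unequal to $S$ or with the wrong bottom element), or they are ordinary change symbols which, together with any subsequent pops of an unaltered stack, prevent the state from returning to $|q_0\rangle$. In every sub-case the composition fails to be the identity.

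The main obstacle I anticipate is not any single calculation but pinning down the notion of ``inverse'': the commit/restore operations act on the \emph{product} of state-space and stack-space, so one has to verify the identity of the composition on both factors, not just on the state. A secondary subtlety is the ordering of $\hat g$ and $\mathrm{push}$ inside a single commit, since the ``apply-then-push'' reading would leave $\hat g_1|q_0\rangle$ at the bottom of the stack and make the lemma false as stated; the proof implicitly fixes the ``push-then-apply'' convention.
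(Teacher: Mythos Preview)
Your argument rests on the same idea as the paper's: the composition of commits followed by restores is the identity precisely when nothing happens in the gap between $t_f'$ and $t_f$, and ``nothing happens'' is equivalent to the journal being empty together with $M$ being deterministic (so that no spontaneous $0_+$ transitions occur). The paper's own proof is a two-sentence sketch to exactly this effect; your version is substantially more careful, in particular your insistence that ``inverse'' be checked on the product of state-space and stack-space, and your observation about the push/apply ordering inside a commit --- a genuine ambiguity in the definition that the paper's terse argument does not address.
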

The proof, once again, follows from the absence of uncaptured
changes. If $t_f'>t_f$ and the journal is empty then the only changes
that can have occurred come from symbols $0_+$, but these only occur
for non-deterministic $M$.
We add the following to this:
\begin{lemma}
A system journal $J$ cannot be used to restoring system state for arbitrary
changes $\hat g$.
\end{lemma}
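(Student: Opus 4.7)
The plan is to derive this statement as a direct consequence of the preceding roll-back lemma together with the earlier totalisation lemma, by exhibiting two independent obstructions that jointly rule out journal-based restoration of arbitrary $\hat g$. First I would pin down what ``arbitrary'' means here: an arbitrary change $\hat g$ is any operator drawn from the span of $\mu(F_X) \cup \Delta(F_X) \cup C(F_X)$ acting on a system that, by our Definition of a computer system, is a non-deterministic automaton.

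The first obstruction is the structural gap between journal and history. By Lemma~\ref{l1}, in any non-deterministic (open) $M$ we have $|H_M| > |J_M|$ in general, and the extra transitions are precisely the spontaneous $\delta_{0_+} q$ steps that the journal is constitutionally unable to record. So even if every symbol in $J$ were individually invertible, applying $J^{-1}$ would not retrace the actual history, and the preceding roll-back lemma shows that the target state $|q_i\rangle$ cannot in general be recovered. This already suffices to prove the statement, since the quantifier is over arbitrary $\hat g$ \emph{including} those interleaved with spontaneous changes.

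The second, independent obstruction is algebraic and would be presented as a concrete counterexample even in the closed/deterministic setting: pick $\hat g = C(q)$, whose matrix is singular, or $\hat g = \mu(0)$, likewise singular. By the earlier totalisation lemma, their inverses exist only if $F_X$ is $0$-totalised, so in an untotalised field a journal consisting of a single $C(q)$ symbol carries no algebraic trace of the value of $X$ prior to the assignment. Hence, even a complete and faithful journal cannot be inverted to restore state.

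The main obstacle I anticipate is not any calculation but the precision of the word ``arbitrary.'' Read literally as ``for every $\hat g$ the journal fails,'' the statement is false, because a journal of $\Delta(\delta q)$'s in a deterministic system does roll back cleanly. I would therefore argue for the natural reading ``there exist $\hat g$ for which no journal suffices,'' and close the proof by emphasising that the two obstructions above are logically independent: non-determinism defeats journals of invertible operators, while singularity of $C(q)$ and $\mu(0)$ defeats journals even in the deterministic case. Either obstruction on its own establishes the lemma, and together they make clear why journals are a strictly weaker tool than histories for the purpose of restoration.
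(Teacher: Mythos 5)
Your proposal is correct, and its second obstruction is essentially the paper's own argument made concrete. The paper disposes of this lemma in one sentence: the result ``is clear from the independence of the restore operation on $\hat g$, and the lack of a stack of actual state in a journal $J$'' --- that is, the commit/restore mechanism succeeds for arbitrary $\hat g$ precisely because it pushes actual states onto a history stack, whereas a journal records only operation symbols; so when $\hat g$ is non-invertible the journal retains no trace of the prior state and restoration is impossible. Your choice of $\hat g = C(q)$ or $\mu(0)$, tied back to the totalisation lemma, is exactly the right concrete witness for this and arguably improves on the paper by naming the singular operators explicitly rather than gesturing at them. Your first obstruction (non-determinism forcing $|H_M| > |J_M|$) is sound but is really the content of the earlier roll-back lemma, not of this one; the present lemma is specifically about the failure of journals \emph{even granting determinism}, because of what a journal is (symbols, not states). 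Including it does no harm, but the paper does not need it here. Your reading of ``arbitrary'' as existential (``there exist $\hat g$ for which no journal suffices'') is the correct disambiguation and matches the paper's intent, which contrasts the journal's dependence on invertibility of each $\hat g$ with the restore operation's indifference to $\hat g$.
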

This result is clear from the independence of the restore operation
on $\hat g$, and the lack of a stack of actual state in a journal $J$.

What the foregoing discussion tells us is that there is no predictable
outcome, either in a forward or a reverse direction, in an open
(non-deterministic) system using relative change, and that a journal
is quite useless for undoing changes that have no inverse. System
configuration is analogous to making calculations in which variables
change value spontaneously (as in fact they do without error
correction at the hardware level). To make change computation
predictable, we need to fix the outcomes rather than the sequences of
operations, using `singular change operations' for computing the final
state. This was the main observation learned in the development of
Cfengine\cite{burgessC11,burgessC1,couchDSOM2003}.

\section{Singular transitions and absolute change}

In the foregoing cases, the initial choice of state $|q_i\rangle$ was external to the
specification of the change, and was the `origin' of a sequence
of changes in a journey from start to finish. This relative (`sequential
process') approach to change is deeply in-grained in management and
computing culture, but it fails to bring the require predictability due to
underlying system indeterminism. The problem is the reliance on the $+$
operation to navigate the state space, so our next step is to suppress it.

Now consider a class of transitions that are not usually considered in classic
finite state machines.  These are (non-invertible) elements $\hat p$ with the property
%%% $\Delta_{N,D} \rightarrow g$, where 
that $\hat p |q\rangle = |q_0\rangle$, for any
$q$. The final states are `eigenstates' of these singular group operations:
$\hat p|q_0\rangle = |q_0\rangle$. These effectively demote the explicit reliance on
$+$ and replace it with a linear function.

\begin{definition}
A singular transition function $C_{|q_0\rangle}$ is a transition from any state $|q\rangle$
to a unique absorbing state $|q_0\rangle$. It is a many-to-one transition, and is hence
non-invertible without a history.
\end{definition}

Such transition functions (operators) were introduced in \cite{burgessC1} and described in
\cite{burgessC11}, as an alternative to relative change to restore
the predictability of outcome. These
`convergent operations' are based on fixed points or eigenstates of a
graph. They harness the property of zero elements to ignore the
current and historical states and to install a unique state regardless
of the history or the determinism of the system. Such parameterized
operators form a semi-group $C_{|q_0\rangle}$ with the abstract
property:
\beq
C_{|q_0\rangle} |q\rangle &=& |q_0\rangle\nonumber\\
C_{|q_0\rangle} |q_0\rangle &=& |q_0\rangle.\label{convergence}
\eeq
For ease of notation in the following, we drop the $|q_0\rangle$ subscript and 
write $C$ 
for $C_{|q_0\rangle}$

The price one pays for this restoration of predictability is an inability to
reverse the change. Let us suppose that an object
$C^{-1}$ exists such that $C^{-1} C = I$, satisfying the latter equation. Then operating
on the left, we may
write using (\ref{convergence}):
\beq
C \;|q_0\rangle &=& |q_0\rangle\nonumber\\
C^2 \;|q_0\rangle &=& C \;|q_0\rangle\nonumber\\
C^{-1} C \;|q_0\rangle &=& C^{-1} \;|q_0\rangle
\eeq
Thus, at $|q_0\rangle$ we have idempotence and a constraint:
\beq
C \;|q_0\rangle&=&  C^{-1}\;|q_0\rangle\\
C \;|q_0\rangle&=&  C^2\;|q_0\rangle\label{xidemp}.
\eeq
The latter result (\ref{xidemp}) is independent of the existence of an inverse.
For a ring, this condition is equivalent to the `restricted inverse law' used in
\cite{bergstra2,bergstra4}, and it tells us that the inverse would have to be
either $0$, $1$ or $+\infty$.
\begin{lemma}
The operators $C_{|q_0\rangle}$ are idempotent and converge on a fixed point final state $|q_0\rangle$.
\end{lemma}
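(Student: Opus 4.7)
The plan is to extract both conclusions directly from the defining relations~(\ref{convergence}), which already give nearly everything we need; the lemma is essentially a restatement of those equations, so the proof will be short and the only work is to organize the two claims cleanly.

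First I would address idempotence. Starting from an arbitrary state $|q\rangle$, I would apply $C$ twice and use the first defining equation $C|q\rangle = |q_0\rangle$ to reduce $C^2|q\rangle$ to $C|q_0\rangle$, then apply the second defining equation $C|q_0\rangle = |q_0\rangle$ to conclude $C^2|q\rangle = |q_0\rangle = C|q\rangle$. Since this holds for every $|q\rangle$, we obtain $C^2 = C$ as operators, which is the definition of idempotence. Note that this also recovers equation~(\ref{xidemp}) as a special case at $|q\rangle = |q_0\rangle$.

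Next I would establish the fixed-point and convergence claim. The second defining equation $C|q_0\rangle = |q_0\rangle$ is literally the statement that $|q_0\rangle$ is a fixed point of $C$. For convergence, I would observe that for any starting state $|q\rangle$, the orbit under iteration of $C$ reaches $|q_0\rangle$ after a single step by the first defining equation, and stays there by the fixed-point property; hence the sequence $C^n|q\rangle$ is eventually constant at $|q_0\rangle$ for every initial $|q\rangle$, which is the desired notion of convergence to $|q_0\rangle$. To show uniqueness of the fixed point, I would suppose $C|r\rangle = |r\rangle$ for some $|r\rangle$ and combine this with the first defining equation $C|r\rangle = |q_0\rangle$ to force $|r\rangle = |q_0\rangle$.

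There is no real obstacle here; the lemma is almost a tautological consequence of~(\ref{convergence}). The only mild subtlety is making sure to distinguish the pointwise statement ($C^2|q\rangle = C|q\rangle$ for each $|q\rangle$) from the operator-level statement ($C^2 = C$), and to note explicitly that convergence is achieved in one step rather than as a limit, which is the sense in which these ``convergent operators'' generate fixed points without any metric structure on the state space.
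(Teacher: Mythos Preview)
Your proposal is correct and follows essentially the same route as the paper: both derive the lemma directly from the defining relations~(\ref{convergence}) (the paper cites the derived equation~(\ref{xidemp}), which you recover as a special case). Your argument is in fact more complete, since the paper's one-line justification only exhibits $C^2|q_0\rangle = C|q_0\rangle$ at the fixed point, whereas you establish $C^2 = C$ on all states and also spell out the convergence and uniqueness claims explicitly.
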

This follows immediately from eqn (\ref{xidemp}).  The value of these
operations is that they can be iterated endlessly, with predictable
outcome, in the manner of a highly compressed system error-correction
process.

Example 1: One can view the state $|q\rangle$ as embodied in the 
operator $C_{|q\rangle}$ and thus view $C_0$, 
$|q\rangle$, and $|q_0\rangle$ as elements of the same semigroup. 
Then we may write: 
\beq
C_0 |q\rangle &=& |q_0\rangle\\
C_0 |q_0\rangle &=& |q_0\rangle.\label{ringex}
\eeq
Assuming an additive inverse for each element (in the statespace), 
and subtracting these
equations for arbitrary $q$ leads to the conclusion that $C_0 = |q_0\rangle =
|0\rangle$, thus there is only a single object with this ability to take an
arbitrary initial state and render a predictable outcome. Note that, in this
representation, $C$ and $|q\rangle$  belong to the same semigroup of scalars. So, choosing
$|x\rangle = C_0 = |q_0\rangle$, we service
\beq
(x^{-1} \cdot x) x = x
\eeq
using (\ref{ringex}). This is the restricted inverse law for fields\cite{bergstra2,bergstra4}.

The zero plays a fundamental role as an eraser.
The uniqueness of zero is not an impediment to using the zero element
as a `policy operator' which sets an intended state, as we are free to
construct a homomorphism $h(q)$ which {\em calibrates} or shifts the
absolute location of the solution: e.g.
\beq
C_0 h(q) &=& h(q_0)\nonumber\\
C_0 h(q_0) &=& h(q_0)\label{homo}
\eeq
to shift the calibration point from $q_0$ to $q_0^*$. 
%%% where $q_0^*$ represents the new calibration point and $h(q) = q-q_0^*$, this
%%% yields the solution $C_0 = 0, q_0 = q_0^*$.

%%% I don't believe this: 
%%%We can make a similar version of this as a vector space when $C$ and $q$
%%%belong to different rings.

Example 2: Consider the tuple form used earlier, and let

\beq
C_0 &\mapsto& \left( 
\begin{array}{cc}
0 & q_0\\
0 & 1\\
\end{array}
\right)\nonumber\\
|q_0\rangle  &\mapsto& \left( 
\begin{array}{c}
q_0\\
1
\end{array}
\right)\label{matrixrep}
\eeq
Thinking of these as elements of the same semigroup and 
subtracting these equations leads to a result that is identically
true, hence we are free to choose the value of $q_0$ as a matter of
policy.  However, one observes that $C_0$ does not 
possess a defined inverse according to the normal rules of fields.

%%% \begin{lemma}
%%% Any inverse representation of a linear operator, taking values
%%% in a field $F$ and satisfying 
%%% $C^{-1}C = I$ and (\ref{convergence}) must
%%% involve division by zero.
%%% \end{lemma}
%%% \begin{proof}
%%% Let $C$ be a linear function of an underlying field, then $C$ has 
%%% the form $C = m q + q_0$,
%%% and thus the inverse is $C^{-1} = (C-q_0)/m$.
%%% In order to satisfy (\ref{convergence}) for all $q$, we must have $m = 0$.
%%% \end{proof}
%%% This shows that the `rollback' of an absolute state change is directly
%%% analogous to a division by zero in a ring computation. The reason is
%%% clear: both are attempts to reverse a transition after dumping all
%%% history of initial state. We must therefore conclude that either such
%%% operations are irreversible, or that an inverse must be constructed
%%% for completeness, subject to other limitations.

\section{Computing and reversing absolute states}

Fields have only one element with singular properties: the zero
element. It plays two distinct roles: as an identity element for
the $+$ operation, and as a fixed point in scaling under $\cdot$. As a
fixed point, zero annihilates state, since $0 q = 0$ for any $q$. The
zero element thus ignores and deletes any history that led us to the
state $q$.

It is useful to think of the $C$ operators in the above as a kind of 
zero-element: they annihilate state in a similar way. 
The utility of the convergent operations for bringing about absolute
change is such that it is useful to embed them in the formalism of a
general field structure for computation.
One motivation for this is the recent work by Bergstra and Tucker
of totalization of fields, and `Meadows', in which they replace the
partial function (excluding $0$ for division) at the heart of field computation 
with one that is total, up to constraints. We find their construction
intriguing and highly relevant to the matter of reversibility of state. As in Bergstra and Tucker, we reason by first defining the algebraic signatures of structures. 

We construct an image of a field $F$, with initial algebra $Alg(\Sigma_F,E_F)$ (as yet a
regular field), by introducing a map in three piecewise partial representations $\Phi(F) =
\{C(F),\Delta(F),\mu(F)\}$.
The signature contains only product explicitly, 
as addition is concealed as described in section \ref{pt}:
\beq
\cdot &:& \Delta \times \Delta \rightarrow \Delta\\
      &:& \mu \times \mu \rightarrow \mu\\
      &:& C \times C \rightarrow C\\
I_\Delta &=& \Delta(0) \in \Delta \\
I_\mu &=& \mu(1) \in \mu \\
\mu(0) &\not\in& \mu\\
\eeq
We define $E_\Phi$ as the image of $E_F$ for the field by,
\beq
\Delta(x)\Delta(-x) &=& I_\Delta, ~~~~~\forall x \in F\\
\mu(x^{-1})\mu(x) &=& I_\mu ~~~~~\forall x\not = 0 \in F\\
\Delta(x)\Delta(y) &=& \Delta(y)\Delta(x)~~~~~\forall x,y \in F\\
\Delta(x)(\Delta(y)\Delta(z)) &=& (\Delta(x)\Delta(y))\Delta(z)~~~~~\forall x,y,z \in F\\
\mu(x)(\mu(y)\mu(z)) &=& (\mu(x)\mu(y))\mu(z)~~~~~\forall x,y,z \not 0 \in F\\
C(x)C(y) &=& C(x)  ~~~~~\forall x \in F\\
C(x)(C(y)C(z)) &=& (C(x)C(y))C(z)~~~~~\forall x,y,z \in F\\
\eeq
An example in the matrix representation is given by:
\beq
C(x) &=& \left(
\begin{array}{cc}
0 & x\\
0 & 1
\end{array}
\right)\\
\Delta(x) &=&
 \left(
\begin{array}{cc}
1 & x\\
0 & 1
\end{array}
\right)\\
\mu(x) &=& \left(
\begin{array}{cc}
x & 0\\
0 & 1
\end{array}
\right)
\eeq
The function $C$, in any representation, 
gives us a way of representing absolute, not relative, changes of
state.  This is an important ability in maintaining order in a system,
and it is the basis on which Cfengine\cite{burgessC1} operates on
millions of computers around the world today. Each operation is a
function of a field, in which the zero element is mapped to a desired
state. The set of all possible parameterized $C(F)$ must therefore span a field,
and yet it contains no (multiplicative) inverses at all. This is the interesting
paradox which plays into the work of Bergstra and Tucker.
The restriction $x \not = 0\in F$ is prominent.

It is not our intention to reiterate the arguments for totalizing
fields, presented by Bergstra and Tucker\cite{bergstra4}.  As they
point out, there is a number of ways to restore `faith' in the
connection between state and history of change after a zero operation,
using proof systems, axioms and algebraic properties. Each brings a
different kind of merit. As they remark, the issue is not so much
about going backwards (reversal) as about going forwards in a way that
is unaffected by an ill-defined attempt at reversal.

One remedy relies on proving the outcome of a change was not affected
by the result of $0^{-1}$, i.e. the final state is independent of the
path taken to evaluate it. Another involves changing the definitions of
computation (change) to disallow unsafe operations. Finally one might
simply give up on certain requirements so that the outcome satisfies a
well defined set of equations (policies) in order to prove that the
result is well-defined.
Here, we observe by analogy that one may:
\begin{itemize}
\item Introduce a stack of history snapshots
to some maximum depth\cite{bergstra3}. (This is difficult to do in arithmetic but it is plausible for some system changes.)
\item Totalize the data type, using the notion of a totalized field, e.g. set $0^{-1}= 0$, or equivalently, $C^{-1} = C$.
\item Perform a naive reversal and then apply some policy equational specification
to clean up the result.
\item Abandon the attempt to introduce reversals altogether (``rollback does not exist'').
\end{itemize}

\section{Calibration of absolute state}

Let us complete the abstract formalization of the operators for
absolute change, which builds linear functions on top of the totalized
field approach of Bergstra and Tucker, and ends with a vector
space. We no longer care about $\Delta$ and $\mu$, but want to embrace
the properties of the zero operators to bring predictability in a
non-deterministic environment.  We start with a signature for the
convergent operators based on a different use of commutative rings as a parameterization of the
zeroed outcome, and end with
non-commutative, non-invertible representations. Let $F$ signify a field,
with the usual field axioms.

We use a $\Sigma$-algebra $\Sigma_C = \{ |q\rangle\, |\,
I,\,0,\,\oplus,\,\circ,\, C \}$, and this is understood to extend
the field algebra $Alg(\sigma_F,E_F)$. Thus, for any index set labels
$\alpha,\beta$, labeling the underlying field $Q$, we have signature:

\beq
{\rm Symbols}: &~& I, 0, | q_\alpha\rangle, q\in Q\\
{\rm Operations}: &~&\\
&~& I \rightarrow F\\
&~& 0 \rightarrow F\\
 C: &~&F \rightarrow F'\\
\oplus: &~& C(F\times F) \rightarrow  C(F)\\
\circ: &~& C(F)\times  C(F) \rightarrow  C(F)\\
&~& C(F)\times F' \rightarrow F'\\
{\rm Equations} (E_C):&~&\\
&~& C_\alpha |q\rangle = C(q_\alpha) |q\rangle = |q_\alpha\rangle\label{zero}\\
&~& C_\alpha \circ  C_\beta =  C_\alpha\label{idemp}\\
&~&( C_\alpha\circ  C_\beta)\circ C_\gamma =  C_\alpha\circ(  C_\beta \circ C_\gamma)\\
&~& C_\alpha\oplus  C_\beta =  C_{\alpha+\beta}\\
&~&( C_\alpha\oplus  C_\beta) \oplus  C_\gamma =  C_\alpha\oplus ( C_\beta \oplus  C_\gamma) 
\eeq
Naturally, these are true for all $\alpha,\beta,\gamma$, and
we are working with $Alg(\Sigma_F\cup\Sigma_C,E_F\cup E_C)$.  The
opacity of formalism belies a simple structure. Every state
$|q\rangle$ is fully specified by a field value $q \in
F$. Similarly, every convergent operator $C(q)$ is fully
specified by a field value $q \in  F$, and results in a new
value $q \in F$, which obeys the zero property
(\ref{zero}). Thus the $C$ is a transformer which takes any input
state and outputs a specific state given by its label (but
importantly, only one at a time). This has the `zero' property of
ejecting initial state and replacing it wholesale with particular one.
Clearly, the operators must be idempotent from (\ref{idemp}).

The representation in (\ref{matrixrep}) is useful to see how a
tuple-representation quickly captures this algebra.  We say that the
repeated operation of an operator $C(q_0)$ `converges', as it always
returns the system state to its fixed point $|q_0\rangle$. This algebra describes
the behaviour of a single `convergent operator' or `promise' in
Cfengine\cite{burgessC11,burgessC1}.  We cannot define $C^{-1}$
because the symbol $0^{-1}$ is not defined in the underlying field
$F$, but we may totalize the field\cite{bergstra4} with corresponding
merits and conditions to assign a meaning to a reversal or `roll-back'.

\section{Re-calibration - change of policy}

There is only a single fixed point for each operator $C(q_0)$.
What happens when we want to change the outcome of a `promised state',
i.e. change the value of $q_0$?
The homomorphism $h$ on states, in eqn. (\ref{homo}) allowed us to
calibrate a single singular outcome to any field value by shifting the
zero, but this is less useful than modifying the operators themselves
to bring about the desired result. This transformation then has the simple
interpretation as an operator the re-calibrates the system baseline.

\begin{lemma}
Each operator has only one singularity, i.e.
let $F$ be a field, totalized or not, and let $0_1$ and $0_2$
be zero elements for $\cdot$, then $0_1 = 0_2$.
\end{lemma}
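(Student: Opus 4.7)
The plan is to give the standard one-line uniqueness argument for an absorbing element under a binary operation and then verify that nothing in the ``totalised or not'' hypothesis invalidates it. First I would pin down the interpretation: a ``zero element for $\cdot$'' means an element $z \in F$ that multiplicatively absorbs every element, i.e., $z \cdot x = x \cdot z = z$ for all $x \in F$. This is the sense in which $0$ is used throughout the preceding sections (for instance in the discussion that $0 \cdot X = 0$ motivating totalisation, and in the derivation leading to $C_0 = |q_0\rangle = |0\rangle$).

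Given two such absorbers $0_1$ and $0_2$, I would evaluate the single product $0_1 \cdot 0_2$ in two ways. Applying the absorbing property of $0_1$ with $x = 0_2$ yields $0_1 \cdot 0_2 = 0_1$. Applying the absorbing property of $0_2$ with $x = 0_1$ yields $0_1 \cdot 0_2 = 0_2$. Transitivity of equality then forces $0_1 = 0_2$, which is precisely the conclusion.

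The only genuine thing to check is the clause ``totalised or not''. In an ordinary field, multiplication is total and the absorbing axiom holds by hypothesis, so there is nothing to do. In a Bergstra--Tucker totalised field, multiplication is again total (that is the point of the construction), so $0_1 \cdot 0_2$ is an honest element of $F$ and the two rewrites above are both legal. The main -- and essentially only -- obstacle I anticipate is therefore a definitional one: one must be sure that the totalisation procedure has not quietly narrowed the universal quantifier in the defining property of a multiplicative zero, for example by treating newly adjoined elements (such as the totalised value of $0^{-1}$) as outside the domain of the absorbing axiom. I would address this up front by stating explicitly that ``for all $x$'' ranges over the entire carrier of the totalised field. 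Once that convention is fixed, the two-line computation above carries through verbatim, and the lemma is proved.
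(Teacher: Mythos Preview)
Your argument is correct and is exactly the paper's own proof: both evaluate the product of the two candidate zeros and use the absorbing property from each side to force $0_1 = 0_1 \cdot 0_2 = 0_2$. Your additional remarks about the ``totalised or not'' clause are a reasonable elaboration, but the core one-line computation coincides with the paper's.
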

The proof follows by substitution of the field axioms: $0_1
x = 0_1$, $0_2 x = 0_2$, setting $x = 0_2$ in the former, implies
$0_1=0_2$. Hence the zero element is unique in a field. 

This means that
we cannot have more than one policy fixed point per field. In
configuration terms, one cannot have more than one policy for a data
item, so any path of changes parameterized by chaining $q_0$ can be
uniquely characterised.

This leaves only the possibility of shifting the fixed point by
re-calibration, or change of policy. This is no longer a journal
of deltas, but a kind of `teleportation' or `large transformation'
in the group theoretic sense.
Given this, and the utility of
formulating policy changes as applied operations, it is useful to
reformulate the values in terms of {\em vector spaces}. The
specification of a vector space is somewhat similar to that of a ring
or field except that it is not automorphic. 

Let $F$ be a field (totalized or not). A vector space of $F$ is a
triple $(S,+,\cdot)$, with the equations:
\beq
+ &:& S\times S \rightarrow S,\nonumber\\
\cdot&:& F\times S \rightarrow S,\nonumber\\
0 \in S &|& x+0 = 0+x = x,  ~~~~ \forall x \in S\nonumber\\
-x \in S &|& x+(-x) = (-x)+x = 0,  ~~~~ \forall x \in S\nonumber\\
x+y &=& y+x,  ~~~~ \forall x,y\in S\nonumber\\
(x+y)+z &=& x+(y+z),  ~~~~ \forall x,y,z \in S\nonumber\\
(\alpha\beta)z &=& \alpha(\beta z), ~~~~ \forall \alpha,\beta \in F, z \in S\nonumber\\
1_F \in F &|& 1x = x1 = x,  ~~~~ \forall x \in S\nonumber\\
(\alpha+\beta)x &=& \alpha x + \beta x,  ~~~~ \forall \alpha,\beta \in F, x \in S,\nonumber\\
\alpha(x+y) &=& \alpha x + \alpha y,  ~~~~ \forall \alpha \in F, x,y \in S,
\eeq

The usefulness of this map is that it involves an external `promise' or `policy'
field $F$ from which we may construct the set of $C_\alpha$, not merely
an automorphic image of a single set. Thus we can separate policy from changes with convergent, fixed-point
zero-operators $0_A \in F_A$, all acting on a single set of states $q
\in S$.  We thus arrive, by a different route, at the formulation as a
vector space in ref. \cite{burgesstheory}.

We note finally that a change of calibration cannot be a commutative ring.
\begin{lemma}
Let $C_A$ and $C_B$ be zeros of $F_A$ and $F_B$. Then $C_A$ and $C_B$
cannot commute unless $A=B$.
\end{lemma}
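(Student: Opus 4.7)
The plan is to prove the contrapositive: assuming $C_A \circ C_B = C_B \circ C_A$, I would derive $A = B$ by evaluating both compositions on an arbitrary state $|q\rangle$ and invoking the defining absorbing property $C_\alpha |q\rangle = |q_\alpha\rangle$ from equation (\ref{zero}). Reading right to left, $C_A C_B |q\rangle$ first maps $|q\rangle$ to $|B\rangle$ via $C_B$, and then $C_A$ sends that (like any input) to $|A\rangle$. The symmetric calculation yields $C_B C_A |q\rangle = |B\rangle$. Commutativity therefore forces $|A\rangle = |B\rangle$, which gives $A = B$ since the labelling $\alpha \mapsto |q_\alpha\rangle$ is injective.

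The main obstacle is not the calculation, which is essentially a one-liner, but pinning down the intended meaning of ``commute''. The signature listed earlier equips the set of convergent operators with both a composition $\circ$ and a lifted addition $\oplus$ satisfying $C_\alpha \oplus C_\beta = C_{\alpha+\beta}$, and the latter operation is automatically commutative regardless of the labels. The statement is only interesting, and true, under $\circ$, so the proof has to commit explicitly to that interpretation. Once this is fixed, the absorbing property (\ref{zero}) does all the work and idempotence (\ref{idemp}) is not even needed.

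A final subtlety worth flagging is that this lemma sits one level above the preceding uniqueness-of-zero lemma: that result says a single field has only one multiplicative zero, whereas here $C_A$ and $C_B$ are zeros of possibly distinct fields $F_A$ and $F_B$ acting on a common state space. The argument above never uses any algebraic relationship between $F_A$ and $F_B$; it uses only how each $C_\alpha$ acts on states. Consequently the conclusion stands without additional structural assumptions on the pair $(F_A, F_B)$, which is exactly what the vector-space reformulation in the paragraph above the lemma requires in order to separate policy from state.
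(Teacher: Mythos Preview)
Your argument is correct and is essentially the paper's own proof: both compute $C_A C_B$ and $C_B C_A$ via the absorbing property and conclude they differ unless $A=B$. The only cosmetic difference is that the paper phrases the conclusion via the commutator $[C_A,C_B]=C_A C_B - C_B C_A = C_A - C_B \neq 0$, whereas you compare the two compositions directly on an arbitrary state; your version has the mild advantage of not invoking any additive structure on the operators.
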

\begin{proof}
The proof is similar to the earlier proof of uniqueness of zero in a ring. 
We have $ C_A q =  C_A$, and  $ C_B q =  C_B$ for all $q$. Substituting $q = C_B$
in the former, we have
\beq
C_A(C_B)q &=& C_A C_B = C_A\nonumber\\
C_B(C_A)q &=& C_B C_A = C_B
\eeq
Thus the commutator
\beq
[C_A, C_B] =  C_A  C_B -  C_B  C_A =  C_A -  C_B \not= 0 ~~~(A \not= B)
\eeq
This proof does not depend on the representation of $F_A$ and $F_B$,
thus it applies equally to higher dimensional tuple formulations also.
\end{proof}

\section{Predicting outcome with roll-back-safe change}

The problems of indeterminism cannot be addressed without absolute
change operations, but these do nothing to repair the problem of
unsafe reversals. The $C$ operations allow us to basically forget
about indeterminism, but not irreversibility.  We therefore need to
find an approach analogous to that of \cite{bergstra4} during
non-commutative strings of system re-calibrations. We have one
advantage here: a lack of commutativity. This is in fact a strength as
it makes the need for reversal practically irrelevant.  In our view,
there is then only one natural choice for $C^{-1}$ or $J^{-1}$ and
that is to apply or re-apply the current policy $C(t)$: it is
absolute, idempotent and it overrides any previous `mistakes'.

We have also one disadvantage compared to \cite{bergstra4} and that is
that time is relevant: we cannot undo the potential consequences of
being in a bad state unless we manage to totality of state within the
system. For real computers, that might be almost the entire Internet
(e.g. during the spread of viruses).

Other weaker arguments can be made for resetting state to a baseline, e.g.
\noindent (i) Use an arbitrarily chosen baseline state $|q_{\rm initial}\rangle$
or $|t_0\rangle$ so that an arbitrary journal of convergent changes $\hat J_0$
\beq
|t_{\rm final}\rangle &=& \hat J_0 | t_{\rm initial}\rangle\nonumber\\
&\equiv&\ldots \hat O_2\hat O_1  | t_{\rm initial}\rangle
\eeq
has an inverse such that
\beq
J_0^{-1} |t_{\rm final}\rangle =  | t_{\rm initial}\rangle.
\eeq
Assuming the existence of an operator $\hat O_{\rm initial}$ such that
$\hat O_{\rm initial} |q\rangle = |t_{\rm initial}$, then clearly
\beq
J_0^{-1} = \hat O_{\rm initial}.
\eeq
These two choices are both forward-moving absolute changes since they
both involve an arbitrary decision and they both move forward in time.
However the latter is less natural, since it affects to return to a
time in the past which might have nothing directly to do with where
one needs to be in the present.  Our study was motivated by
predictability.  The principal advantage of these remedies lies in
knowledge of the outcome, in the absence of a complete specification.

\section{Multi-dimensional operators}

In the discussion above, we have restricted ourselves to the maintenance
of a single scalar system-value. The issue of dependencies amongst
system changes enters quickly as the complexity of layered models of a system
grows. It was shown in \cite{burgessC11} that one can develop a spanning
set of orthogonal operations that covers the vector space like a coordinate
system, simply by embedding in a geometrical tuple-fashion. In the simplest
expression, one sees this by extending the matrix representation to higher dimensions.

One way to do this is to consider a system as a controlled by a vector of
its individual configuration parameters $X_i$, where each parameter is embedded into the field of rationals and encoded in the obvious way: 
\beq
|X\rangle = 
\left(
\begin{array}{c}
X_1 \\
X_2 \\
\vdots \\
X_n \\
1
\end{array} 
\right)
\eeq
where $|X\rangle$ is `Dirac notation' for state. 
As optimistic and large as current systems may be, they remain finite and can be modeled by finite vectors. 
We define relative operators for individual parameters in a state as 
\beq
\Delta_i(q) = 
\left(
\begin{array}{cccccc}
1 & 0 & \cdots & & & 0 \\
0 & 1 & & & &  \\
\vdots &  & \ddots & & & \\
& & & 1 & & q \\
& & & & \ddots & \\ 
0 & & & & & 1 
\end{array} 
\right) 
\eeq
(where $q$ appears in the $(n+1)$st column of the $i$th row). 
Likewise, absolute operators are defined as 
\beq
C_i(q) = 
\left(
\begin{array}{cccccc}
1 & 0 & \cdots & & & 0 \\
0 & 1 & & & &  \\
\vdots &  & \ddots & & & \\
& & & 0 & & q \\
& & & & \ddots & \\ 
0 & & & & & 1 
\end{array} 
\right) 
\eeq
(where $q$ appears again in the $(n+1)$st column of the $i$th row), 
and multiplicative operators as 
\beq
\mu_i(q) = 
\left(
\begin{array}{cccccc}
1 & 0 & \cdots & & & 0 \\
0 & 1 & & & &  \\
\vdots &  & \ddots & & & \\
& & & q & &  \\
& & & & \ddots & \\ 
0 & & & & & 1 
\end{array} 
\right) 
\eeq
(Where $q$ appears in the $i$th row and column). 
Thus we propose to model configuration changes in a system via a set of 
matrices with rational entries. 

This completes the construction of the Cfengine operators. Clearly
the zero inverse solution applies independently to each of these
diagonal operators in this basis, but becomes rapidly more entangled in other
parameterizations, where dependencies occur.

\section{Concluding remarks}

We have shown that neither the outcome of a journal of changes,
nor a reversal (undo operation) is generally meaningful or
well-defined in an incompletely specified, or non-deterministic
system. A deterministic outcome can only be obtained by grounding
a system to a policy defined state, analogous to `zero' in a field.

Neither the `restoration of state by roll-back' nor
`division safe calculation', \`a la Bergstra and Tucker, are about
how one goes backwards, but rather about how one recovers meaningfully
forwards, given that a poorly defined operation was attempted at some
point in the past.  A classic answer is `well, don't do that' -- but
we know that someone will always attempt to perform ill defined
operations and thus our story has a practical meaning, of some
importance.

The solutions here mirror way the problem of singularities is handled in other
areas of mathematics, e.g.  in complex analysis one as analytical
continuation\cite{continuation}, in which a path or history through
the states can be defined such that the final result avoids touching
the singular cases.  Similarly in algebraic topology, the uniqueness
of the result can then depend on the path and cohomology.

The totalization remedies described by Bergstra and Tucker underline
an approach to a wider range of problems of incomplete
information. The ultimate conclusion of this work is that `rollback'
cannot be achieved in any well-defined sense without full system
closure. A choice about how to go forward is the only deterministic
remedy.  

There are plenty of topics we have not touched upon here.
\begin{problem}
We have not taken into account fields in which external boundary values
are imposed on $S$. Then we would have further fixed points in the 
total history of a system to contend with:
\beq
\lbrace \min_S, 0_A, 0_B, \ldots \max_S\rbrace
\eeq
Each of these might be a reasonable candidate for `re-grounding' the system
in an undo. What conditions might be imposed when $0_A$ falls outside the range
$[\min_S,\max_S]$.
\end{problem}

\begin{problem}
We have not taken into account operators that depend on one another
in non-orthogonal fashion\cite{lisa0163}. Dependencies between
operators add potentially severe complications to this account.
\end{problem}

There is a deeper issue with roll-back in partial systems. If a system
is in contact with another system, e.g. receiving data, or if we have
partitioned a system into loosely coupled pieces only one of which is
being changed, then the other system becomes a part of the total
system and we must write a hypothetical journal for the entire system
in order to achieve a consistent rollback.
\begin{problem}
The partial restoration can leave a system in an inconsistent state
that it has never been in before and is not a state that was ever
intended.
\end{problem}

The results in this paper are directly applicable to to hands-free
automation, or `computer immunology', as demonstrated by Cfengine.
Opponents of automation have look for ways of arguing that traditional
journaling approaches to system maintenance are
necessary\cite{lisa0299}, preserving the role of humans in system
repair.  However, we argue that the role of humans is rather in
deciding system policy: it is known that the computational complexity
of searching for convergent operations is in PSPACE and NP
complete\cite{larschap,sunchap}, thus it remains the domain of
heuristic methods and system experts to find these convergent in more
complex cases.

\vspace{1cm}

{\bf Acknowledgment: }
This work is dedicated to Jan Bergstra on the occasion of his 60th
birthday.

\bibliographystyle{unsrt} \bibliography{LISAbib,spacetime}
\end{document}